\newtheorem{theo}{Theorem}
\newtheorem{thm}[theo]{Theorem}
\newcommand{\cO}{\mathcal{O}}
\newcommand{\id}{\mathbb{I}}
\newcommand{\tr}[2]{\mathrm{tr}_{#2} \left\{ #1 \right\}}
\newcommand{\trace}[1]{\mathrm{tr}\left\{#1 \right\}}
\newcommand{\proj}[1]{\ket{#1}\!\bra{#1}}
\newcommand{\half}{$\frac{1}{2}$ }
\newcommand{\cM}{\mathcal{M}}
\definecolor{darkgreen}{rgb}{0,.5,0}
\definecolor{darkred}{RGB}{230,159,0}
\definecolor{darkblue}{RGB}{86,180,233}
\tikzset{
  mycustomnode/.style={
  single arrow,
  draw, fill,
  inner sep=0.2cm,
  minimum height=4cm,
  single arrow head extend=0.8cm,
  }
}
\begin{document}

\title{{Reference frames which separately store \protect\\ non-commuting conserved quantities}}

\author{Sandu Popescu}
\email{s.popescu@bristol.ac.uk}
\affiliation{H. H. Wills Physics Laboratory, University of Bristol, Tyndall Avenue, Bristol, BS8 1TL, United Kingdom}

\author{Ana Bel\'{e}n Sainz}
\email{ana.sainz@ug.edu.pl}
\affiliation{International Centre for Theory of Quantum Technologies, University of Gda\'nsk, 80-308 Gda\'nsk, Poland}
\affiliation{Perimeter Institute for Theoretical Physics, Waterloo, Ontario, Canada, N2L 2Y5}

\author{Anthony J. Short}
\email{tony.short@bristol.ac.uk}
\affiliation{H. H. Wills Physics Laboratory, University of Bristol, Tyndall Avenue, Bristol, BS8 1TL, United Kingdom}

\author{Andreas Winter}
\email{andreas.winter@uab.cat}
\affiliation{ICREA---Instituci\'o Catalana de Recerca i Estudis Avan\c{c}ats, Pg.~Lluis Companys 23, 08010 Barcelona, Spain}
\affiliation{F\'{\i}sica Te\`{o}rica: Informaci\'{o} i Fen\`{o}mens Qu\`{a}ntics, Departament de F\'{\i}sica, Universitat Aut\`{o}noma de Barcelona, 08193 Bellaterra (Barcelona), Spain}

\date{17 April 2020}

\begin{abstract}
Even in the presence of conservation laws, one can perform arbitrary transformations on a system if given access to a suitable reference frame, since conserved quantities may be exchanged between the system and the frame. Here we explore whether these quantities can be separated into different parts of the reference frame, with each part acting as a `battery' for a distinct quantity. For systems composed of spin-\half particles, we show that the components of angular momentum $S_x$, $S_y$ and $S_z$ (non-commuting conserved quantities) may be separated in this way, and also provide several extensions of this result.  These results also play a key role in the quantum thermodynamics of non-commuting conserved quantities.
\end{abstract}

\maketitle

Conservation laws are amongst the most important and widely used aspects of physics,  greatly restricting the possible transformations that an isolated system can undergo \cite{Wig52, Ara60, Marvian}. However, when a system  is not isolated but allowed to interact with other systems, with conservation laws applying only globally, much greater freedom is possible \cite{AharonovSusskind1, AharonovKaufherr,RobRTA,Ahmadi, Marvian13, QRF18}. The situation is particularly interesting in quantum theory, where different conserved quantities may not commute (such as the different components of angular momentum),  and interference effects are crucial. 

Perhaps surprisingly, it has been shown that any transformation of a quantum system can be implemented, as long as one has access to an appropriate ancillary system \cite{RobRTA,Ahmadi, Marvian13, QRF18}. This additional system plays a dual role of providing a reference frame for the transformation, and acting as a reservoir which can exchange conserved quantities with the system. 

While undergoing its transformation, the system will generally exchange many different conserved quantities with the reference frame. An interesting question is whether the exchange of each of these different conserved quantities can be separated into different parts of the reference system, with each part effectively acting as a `battery' for a specific conserved quantity. When the conserved quantities commute, this is possible \cite{NCCQ}. The main question we raise - and answer - in the present paper, is whether this is possible when the conserved quantities do {\it not} commute. In this situation one might expect that it is impossible to separate the various conserved quantities into different reservoires, because one cannot even measure one of them without disturbing the other. Surprisingly, we show that this is possible. 

As a concrete example, consider rotations of a spin-\half particle in the presence of angular momentum conservation. We would like to be able to localise any changes in the three components of spin $s_x$, $s_y$ and $s_z$ {of the spin-\half particle} within different subsystems in the reference frame {(see Fig.~\ref{thefig})}. In this paper, we show that this is indeed possible. Our approach will generalise to any unitary transformation on any number of spin-\half systems. Furthermore, {as} an interesting example of such a protocol, we show how to completely extract the three components of angular momentum of an unknown spin state into three distinct systems (up to arbitrary precision). Finally, in the Supplementary material, we extend these results for arbitrary conserved quantities of any system of dimension $2^n$ and to conservation of angular momentum under rotations for arbitrary spin.

\begin{figure}
\begin{center}
\begin{tikzpicture}[scale=0.5]
\draw[thick,->] (0,0) -- (-3,-3);
\node at (-3.5,-3) {$s_x$};
\draw[thick,->] (0,0) -- (0,6.5);
\node at (0.6,6.3) {$s_z$};
\draw[thick,->] (0,0) -- (8,0);
\node at (8,0.5) {$s_y$};

\node at (3,4.5) {$\mathbf{s}$};
\node[draw,fill,color=darkblue,shape=circle,scale=.5] (a) at (3,4) {} ;
\draw[thick, color=darkblue,->] (0,0) -- (a);
\draw[dashed, color=darkblue] (0,6) -- (a);
\draw[dashed, color=darkblue] (3,-2) -- (a);
\draw[dashed, color=darkblue] (-2,-2) -- (3,-2);
\draw[dashed, color=darkblue] (5,0) -- (3,-2);

\node at (6,2.5) {$\mathbf{s'}$};
\node[draw,fill,color=darkred,shape=circle,scale=.5] (b) at (6,2) {} ;
\draw[thick, color=darkred,->] (0,0) -- (b);
\draw[dashed, color=darkred] (0,3) -- (b);
\draw[dashed, color=darkred] (6,-1) -- (b);
\draw[dashed, color=darkred] (-1,-1) -- (6,-1);
\draw[dashed, color=darkred] (7,0) -- (6,-1);

\draw[thick, decorate,decoration={brace,amplitude=5pt}] (-2,-2)--(-1,-1);
\node at (-2.5,-1.2) {$\Delta s_x$};
\draw[thick, decorate,decoration={brace,amplitude=5pt}] (5,0)--(7,0);
\node at (6.5,0.7) {$\Delta s_y$};
\draw[thick, decorate,decoration={brace,amplitude=5pt}] (0,3)--(0,6);
\node at (-1,4.5) {$\Delta s_z$};

\draw[thick,fill,rounded corners, color=darkgreen!20!white] (-5,7) rectangle (7,13) ;

\node at (1,12) {$z$-battery};
\draw[thick,rounded corners] (0,8) rectangle (2,11) ;
\draw[thick] (0.3,11) rectangle (0.7,11.3);
\draw[thick] (1.3,11) rectangle (1.7,11.3);

\node[mycustomnode, scale=0.1, rotate=0] at (0.6,9) {};
\node[draw,fill,shape=circle,scale=.5, color=gray!80!white] at (0.5,9) {};
\node[mycustomnode, scale=0.1, rotate=225] at (1.4,8.9) {};
\node[draw,fill,shape=circle,scale=.5, color=gray!80!white] at (1.5,9) {};
\node[mycustomnode, scale=0.1, rotate=225] at (1.4,9.9) {};
\node[draw,fill,shape=circle,scale=.5, color=gray!80!white] at (1.5,10) {};
\node[mycustomnode, scale=0.1, rotate=0] at (0.6,10) {};
\node[draw,fill,shape=circle,scale=.5, color=gray!80!white] at (0.5,10) {};

\node at (-3,12) {$x$-battery};
\draw[thick,rounded corners, shift={(-4,0)}] (0,8) rectangle (2,11) ;
\draw[thick, shift={(-4,0)}] (0.3,11) rectangle (0.7,11.3);
\draw[thick, shift={(-4,0)}] (1.3,11) rectangle (1.7,11.3);

\node[mycustomnode, scale=0.1, rotate=0] at (-3.4,9) {};
\node[draw,fill,shape=circle,scale=.5, color=gray!80!white, shift={(-4,0)}] at (0.5,9) {};
\node[mycustomnode, scale=0.1, rotate=90] at (-2.5,9.1) {};
\node[draw,fill,shape=circle,scale=.5, color=gray!80!white, shift={(-4,0)}] at (1.5,9) {};
\node[mycustomnode, scale=0.1, rotate=90] at (-2.5,10.1) {};
\node[draw,fill,shape=circle,scale=.5, color=gray!80!white, shift={(-4,0)}] at (1.5,10) {};
\node[mycustomnode, scale=0.1, rotate=0] at (-3.4,10) {};
\node[draw,fill,shape=circle,scale=.5, color=gray!80!white, shift={(-4,0)}] at (0.5,10) {};

\node at (5,12) {$y$-battery};
\draw[thick,rounded corners, shift={(4,0)}] (0,8) rectangle (2,11) ;
\draw[thick, shift={(4,0)}] (0.3,11) rectangle (0.7,11.3);
\draw[thick, shift={(4,0)}] (1.3,11) rectangle (1.7,11.3);

\node[mycustomnode, scale=0.1, rotate=90] at (4.5,9.1) {};
\node[draw,fill,shape=circle,scale=.5, color=gray!80!white, shift={(4,0)}] at (0.5,9) {};
\node[mycustomnode, scale=0.1, rotate=225] at (5.4,8.9) {};
\node[draw,fill,shape=circle,scale=.5, color=gray!80!white, shift={(4,0)}] at (1.5,9) {};
\node[mycustomnode, scale=0.1, rotate=225] at (5.4,9.9) {};
\node[draw,fill,shape=circle,scale=.5, color=gray!80!white, shift={(4,0)}] at (1.5,10) {};
\node[mycustomnode, scale=0.1, rotate=90] at (4.5,10.1) {};
\node[draw,fill,shape=circle,scale=.5, color=gray!80!white, shift={(4,0)}] at (0.5,10) {};

\draw[->, thick, color=darkgreen] (-2.5, -0.7) to [in=270, out=90] (-3, 7.9); 
\draw[->, thick, color=darkgreen] (-1.2, 4.9) to [in=270, out=90] (0.5, 7.9); 
\draw[->, thick, color=darkgreen] (6.6, 1) to [in=270, out=90] (5.5, 7.9); 

\end{tikzpicture}
\end{center}
\caption{{A spin-$\tfrac{1}{2}$ system, initially with spin $\mathbf{s}$ 
(color blue online) undergoes a rotation, after which its spin becomes 
$\mathbf{s'}$ (color orange online). During this rotation, its spin 
components in the $x$, $y$ and $z$ direction change by $\Delta s_x$, $\Delta s_y$ and $\Delta s_z$, respectively. This unitary evolution is implemented by acting on the system plus the 
particles of a reference frame (green box, online), composed of spin-$\tfrac{1}{2}$ 
particles that are divided into three groups: (i) the $x$-battery, 
composed of particles in the + eigenstates of $s_y$ or $s_z$, (ii) the $y$-battery, 
composed of particles in the + eigenstates of $s_x$ or $s_z$, and 
(iii) the $z$-battery, composed of particles in the + eigenstates of $s_x$ 
or $s_y$. {Up to arbitrary accuracy, the} particles in the reference 
frame corresponding to the $k$-battery will absorb all and only the change $\Delta s_k$ 
of the evolved spin system.}}\label{thefig}
\end{figure}
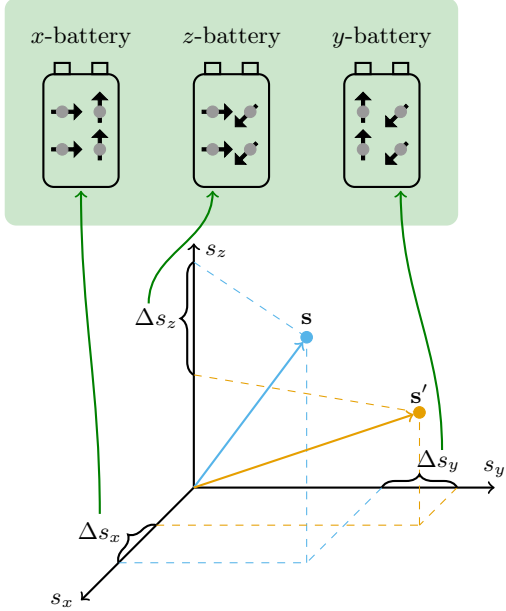

These results address a fundamental aspect of conservation laws  in quantum theory. 
Furthermore, they are of particular importance in quantum thermodynamics, which has 
recently been extended to multiple {non-commuting} conserved 
quantities \cite{vaccaro, barnett, halpern2016, halpern2018, lostaglio2017, NCCQ, jonathan, gour, ito, Manzano, Sparaciari, Yunger}, 
and where batteries for storing conserved quantities are of particular interest.

\textit{Separate batteries for spin-\half systems.---}%
We consider the case of spin-\half particles, and the possibility of separating the different conserved components of angular momentum $s_x$, $s_y$ and $s_z$ {(see Fig.~\ref{thefig})}. We will show that by considering a fixed reference frame composed of multiple spin-\half particles, and interacting with it in a particular way: (i) the total angular momentum of the system and reference frame is conserved, (ii) any unitary transformation can be implemented on the system with arbitrary precision, and (iii) any changes in the average angular momentum components of the system are stored in different parts of the reference system  (up to arbitrarily small correction terms). Essentially, we can think of the reference system as being partitioned into three separate batteries, each of which stores a different component of angular momentum. 

The reference system must: (a) indicate the $x$, $y$ and $z$ directions in order to allow rotations of the system about these directions and (b) do this via a rotationally invariant interaction so that the total angular momentum is conserved. One way to construct a reference frame for the $x$-direction would be to prepare a number of spins all pointing in the $x$-direction (as in \cite{QRF18}). This could be used to implement a rotation of the system about $x$, but each spin in the reference frame would generally accumulate changes in both $s_y$ and $s_z$, thus not separating the conserved quantities. The key intuition behind our approach is that there is an alternative way to define the $x$-direction. Instead of aligning each spin in the reference frame along the $x$-direction, we prepare pairs of spins pointing in the $y$ and $z$ direction, and implement a cross product via the interaction. In this way, each spin in the reference frame accumulates changes in only one component of spin, and thus the different conserved quantities can be separated. 

Our result is formalised in the following theorem. Using lower case $s$ for the spin of individual spin-\half particles and capital $S$ for the spin of systems composed by many spins, we have the following. Let our system be a spin-\half particle, whose spin is denoted by $\textbf{s}$, with components $\{s_k\}_{k=x,y,z}$.  Let the total spin of the $j$-part of the reference frame (i.e. the part which is intended to store spin in the $j$-direction) be denoted by $\textbf{S}^{(j)}$, {$j = x,y,z$}, and has components $\{S_k^{(j)}\}_{k:x,y,z}$; and finally, the total spin operator of the system and frame is denoted by ${\bf S}^{\text{tot}}=\textbf{s}+\textbf{S}^{(x)}+\textbf{S}^{(y)}+\textbf{S}^{(z)}$, and its components given by $S^{\text{tot}}_k = s_k + S_k^{(x)}+S_k^{(y)}+S_k^{(z)}$, $k = x,y,z$. 

\begin{thm} 
\label{thm:spin} 
Let the system ${\mathcal S}$ be a {spin-$\frac12$} particle. Then for every $\epsilon>0$ and $\delta> 0$, there exists a reference frame $\cal R$  (composed of a large number of spin-half particles) with a fixed state $\rho_R = \rho^{(x)}_R \otimes \rho^{(y)}_R \otimes \rho^{(z)}_R$,  where $\rho^{(k)}_R$ is the state of the $k$-part of the reference system, such that for every unitary $U_S$ on the system there exists a joint unitary $V$  on the system and reference frame with the following properties: 
\begin{itemize}
\item \emph{Conservation:} $V$ conserves all components of total angular momentum $\textbf{S}^{\text{tot}}$, {i.e.,} $[V, \textbf{S}^{\text{tot}}] = 0$. 
\item \emph{Accuracy:} $V$ effectively implements $U_S$ on the system
{with precision $\epsilon$, i.e., for any initial system state $\rho_S$,}
\begin{equation}\label{eq:thefirst}
  \phantom{==}
  \left\| \tr{V \, \rho_S \otimes \rho_R\, V^\dagger }{R} - U_S \rho_S U_S^\dagger \right\|_1 \leq \epsilon. 
\end{equation}
\item \emph{Separation:} Each component of  angular momentum of the system is exchanged only with the corresponding part of the reference system, up to {precision $\delta$:}
\begin{align} \label{eq:second}
\left| \Delta s_j  + \Delta S_j^{(j)} \right| &\leq \delta, \\
\left| \Delta S_j^{(k)} \right| & \leq \delta  \quad \text{if } j \neq k, \label{eq:third}
\end{align} 
where $\Delta s_j$ is equal to the change in the average angular momentum of the system in the $j$-direction, and  $\Delta S_j^{(k)}$ is equal to the change in the average angular momentum of the $k$-part of the reference system in the $j$-direction.
\end{itemize}
\end{thm}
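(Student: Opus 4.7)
The plan is to reduce an arbitrary $U_S \in \mathrm{SU}(2)$ to a composition of elementary rotations and build a rotationally-invariant protocol for each. Any such $U_S$ admits an Euler decomposition $U_S = e^{-i\alpha s_z}\,e^{-i\beta s_y}\,e^{-i\gamma s_z}$, so it suffices to construct, for each elementary rotation $R_k(\theta) = e^{-i\theta s_k}$, a rotationally-invariant unitary $V^{(k)}_\theta$ on system plus frame that (i) approximates $R_k(\theta)$ on the system with trace-distance error at most $\epsilon/3$ and (ii) satisfies the separation property with error at most $\delta/3$, using its own disjoint pool of reference spins. Concatenating the three protocols and invoking subadditivity of trace distance under composition of unitaries then yields the full $V$ with the stated precision.

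For the elementary step, fix a cyclic permutation $(i,j,k)$ of $(x,y,z)$, so that $\hat{\imath}\times\hat{\jmath}=\hat{k}$. Let $\vec{B}$ denote the collective spin of $N$ particles in the $j$-battery prepared in the $|{+i}\rangle$ state, and let $\vec{A}$ denote the collective spin of $N$ particles in the $i$-battery prepared in $|{+j}\rangle$. I would take the interaction generator to be the scalar triple product $H^{(k)} = \mathbf{s}\cdot(\vec{B}\times\vec{A})$, which is manifestly invariant under global rotations and hence commutes with every component of $\mathbf{S}^{\text{tot}}$; setting $V^{(k)}_\theta = \exp(-i\theta H^{(k)}/N^2)$ therefore gives the conservation property for free. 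Note that the $k$-battery is not involved in this step, so automatically $\Delta S_\ell^{(k)} = 0$ for every $\ell$.

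For accuracy, one exploits the concentration of $\vec{B}/N$ and $\vec{A}/N$ around the classical unit vectors $\hat{\imath}$ and $\hat{\jmath}$ with fluctuations of order $1/\sqrt{N}$ (a central-limit-type statement for the collective spin of many polarized particles). This makes $H^{(k)}/N^2$ close to $\mathbf{s}\cdot(\hat{\imath}\times\hat{\jmath}) = s_k$ on the relevant states, and a standard reference-frame estimate converts it into a trace-norm bound of order $1/\sqrt{N}$ uniformly in $\rho_S$. For separation, the key observation is a first-order stability argument: each reference spin initially in $|{+j}\rangle$ has maximal expectation $\langle\sigma_j\rangle = 1$, so any perturbation can change $\sigma_j$ only at second order, whereas $\sigma_i$ and $\sigma_k$ can change at first order. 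Hence the $|{+j}\rangle$ spins of the $i$-battery cannot absorb $\Delta s_j$ at leading order, and symmetrically the $|{+i}\rangle$ spins of the $j$-battery cannot absorb $\Delta s_i$. Combined with conservation of $\mathbf{S}^{\text{tot}}$ and the fact that the $k$-battery is untouched, this routes $\Delta s_i$ into the $i$-battery and $\Delta s_j$ into the $j$-battery, up to $O(1/\sqrt{N})$ residuals.

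The main obstacle I anticipate is making both the accuracy and the separation bounds genuinely uniform (in the system state and in the rotation angle $\theta$) and driving them down simultaneously with a single choice of $N$. Concretely, one must control the second-order terms in the Baker--Campbell--Hausdorff expansion of $V^{(k)}_\theta$ and verify that they remain subleading in $N$; the disjointness of the battery subsystems used by the three composed steps is essential here, since otherwise leakage between steps could undo the separation achieved in an earlier step.
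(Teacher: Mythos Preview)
Your construction shares the paper's central physical idea: generate the dynamics by the rotationally-invariant scalar triple product $\mathbf{s}\cdot(\mathbf{s}'\times\mathbf{s}'')$, and prepare the reference spins polarized \emph{perpendicular} to the desired rotation axis so that the cross product picks it out. The execution, however, differs. The paper does not evolve once under a collective $\mathbf{s}\cdot(\vec{B}\times\vec{A})$ with $N$ spins in each arm; instead it applies the three-body unitary $V_\alpha=\exp(-i\tfrac{4\alpha}{N}T)$ with $T=\mathbf{s}\cdot(\mathbf{s}'\times\mathbf{s}'')$ to the system and a \emph{single fresh pair} of reference spins, repeats this $N$ times with new pairs (a Trotter-style build-up of $e^{-i\alpha s_k}$), and composes small $x$-, $y$-, $z$-rotations to reach an arbitrary $U_S$. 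Because each step is small in operator norm, both accuracy and separation follow from an explicit first-order expansion in $1/N$, giving $O(1/N)$ errors. Your single-shot collective evolution has the exponent $\theta H^{(k)}/N^2$ of order one in operator norm, so the ``$\vec{B}/N$ concentrates around $\tfrac12\hat{\imath}$'' heuristic does not directly translate into a trace-norm bound on the output; it can likely be made rigorous, but it is more work than the iterative route, and your claimed $O(1/\sqrt{N})$ is weaker than the paper's $O(1/N)$.

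There is also a genuine gap in your separation argument. The first-order stability observation---a spin in $|{+j}\rangle$ cannot change $\langle s_j\rangle$ to first order because it is already maximal---blocks only one of the two unwanted channels. It does \emph{not} prevent the $i$-battery spins (prepared in $|{+j}\rangle$) from picking up $s_k$, nor the $j$-battery spins from picking up $s_k$; total conservation together with $\Delta s_k\approx 0$ only forces $\Delta S_k^{(i)}+\Delta S_k^{(j)}\approx 0$, not each term separately, whereas the theorem requires $|\Delta S_k^{(i)}|,|\Delta S_k^{(j)}|\leq\delta$ individually. In the paper this vanishing comes from the Levi-Civita structure, not from stability: the first-order change of, say, $s'_x$ for a $\tau'_y$ reference spin during the nominal $x$-rotation is proportional to $\sum_{j,k,\ell}\epsilon_{jk\ell}\,\mathrm{tr}(s_j\rho_S)\,\mathrm{tr}(s'_x[s'_k,\tau'_y])\,\mathrm{tr}(s''_\ell\tau''_z)$, and the only $k$ making the middle trace nonzero coincides with the only $\ell$ making the last trace nonzero, which $\epsilon_{jk\ell}$ then kills. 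Your collective version should inherit the same cancellation, but you must invoke the antisymmetry of the triple product explicitly; the maximal-spin heuristic alone is insufficient to close the argument.
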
 

\begin{proof} 
The proof strategy is to first show how to implement a small rotation of the system around the $x$-axis, by using the system plus two spins in the reference frame, one polarised in the $y$-direction and one in the $z$-direction. We then extend this to small rotation about an arbitrary axis, by doing subsequent small rotations about the $x$, $y$ and $z$-direction. Finally we repeat this many times to build up a general rotation the system. 

The key is to consider the following operator acting on three spin-\half particles 
\begin{equation} \label{eq:theT}
T =\mathbf{s} \cdot   (\mathbf{s}' \times \mathbf{s}'') = \sum_{j,k,\ell \in \{x,y,z\}} \epsilon_{jk\ell} \,s_j  s'_k s''_\ell\,,
\end{equation} 
{where {$\mathbf{s}$, $\mathbf{s}'$ and $\mathbf{s}''$} are spin operators of the three particles.
In the {right-hand side} of Eq.~\eqref{eq:theT}, the} sub-index $j$ in the spin operators denotes the spin component in the $j$ direction, and $\epsilon_{jk\ell} $ the Levi-Civita tensor.  For simplicity we have set $\hbar=1$, so each spin operator is equal to half the corresponding Pauli operator.

The operator $T$ is invariant under rotations, as it is a ``scalar'', being defined as a 
dot product of two vectors, $\mathbf{s} $ and  $(\mathbf{s}' \times \mathbf{s}'')$. 
We can now construct a unitary interaction between three qubits given by 
\begin{align}
\label{eq:spinU}
  V_\alpha = \exp\{ -i \tfrac{4 \alpha}{N} T \}.
\end{align}

Since $T$ is invariant under rotations, it commutes with the total spin 
$\textbf{S}^{\text{tot}} = \mathbf{s} + \mathbf{s}' + \mathbf{s}''$. In particular this means 
that  also $V_{\alpha}$ preserves the conserved quantities $S_x$, $S_y$ and $S_z$, 
and hence satisfies  $[V_\alpha, \textbf{S}] = 0$ (see Appendix).
 
Let us define $\tau_j$ as the density matrix of a spin-\half particle pointing in the $j$ 
direction (i.e. $\tau_j={I\over 2}+s_j$); for example $\tau_z = \proj{\uparrow_z}={I\over 2}+s_z$, 

Suppose that we want to implement a small rotation of the system about the $x$-direction, given by $U_{\alpha,x} = \exp\left(-i \frac{\alpha}{N} s_x \right)$. To do this we prepare the spins $\mathbf{s}'$ and $\mathbf{s}'' $ in the density matrix $\tau_y' \otimes \tau_z''$ and act with $V_\alpha$ on the system and these two spins. 

We show in the Appendix that
\begin{align}  \label{eq:single-spin} 
\nonumber \tr{ V_{\alpha} \, \rho_S \otimes \tau_y' \otimes \tau''_z \, V_{\alpha}^\dagger }{R} &=  
U_{\alpha, x} \,\rho_S \, U_{\alpha, x}^{\dagger} \\ &+ \cO\left( \tfrac{1}{N^2} \right) .
\end{align} 
In this way, the state  $\tau_y' \otimes \tau_z''$ defines a reference frame for the direction $x$.

We now consider how the angular momentum in the reference system changes under this transformation. 

\begin{align} 
  \label{eq:spindelta1} 
  \Delta s'_z  &= - \frac{\alpha}{N} \trace{ s_y \rho_S} +  \cO\left( \tfrac{1}{N^2} \right), \\
  \label{eq:spindelta2} 
  \Delta s''_y &= \frac{\alpha}{N}  \trace{ s_z \rho_S} +  \cO\left( \tfrac{1}{N^2} \right), 
\end{align}
while all other components of the reference spins are left unchanged, up to 
$\cO\left( \tfrac{1}{N^2} \right)$.
{These} equations are proven in the Appendix.
Hence to leading order, the first reference system only picks up $z$-spin, and the second reference system only picks up $y$-spin.
{Therefore,  in the context of the overall reference frame, $\mathbf{s}'$ belongs to the $z$-part, and $\mathbf{s}''$ to its $y$-part.}  

As total angular momentum is conserved, it follows that the change in the system's angular momentum obeys 
\begin{align} \label{eq:work_sep}
\Delta s_x &=  \cO\left( \tfrac{1}{N^2} \right), \quad 
\Delta s_y = -  \Delta s''_y + \cO\left( \tfrac{1}{N^2} \right), \nonumber \\
\Delta s_z &= -  \Delta s'_z + \cO\left( \tfrac{1}{N^2} \right).
\end{align}

Similarly, due to the cyclic symmetry of the spin operators, we can generate a small rotation of the system about the $y$-direction or $z$-direction by acting with $V_{\alpha}$ on the system and a reference frame  in the state  $\tau_z' \otimes \tau_x''$ or $\tau_x'\otimes \tau_y''$ respectively.  In each case,  the two components of angular momentum {that} change (perpendicular to the axis of rotation) are separated into the two different reference spins. For example, when performing a small $y$-rotation with the frame $\tau_z'\otimes \tau_x''$, to first order in $\frac{1}{N}$ only the {$x$-spin} of the first reference particle and the $z$-spin of the second reference particle are modified. Intuitively, neither reference system changes its spin-component parallel to the axis of rotation of the system, and each reference spin does not change its spin-component parallel to the direction it was originally pointing in, as it is  maximal in this direction and we are considering only first order changes. 
  
A small rotation about a general axis, given by the unitary $U_H = \exp\left( -i\frac{H}{N} \right) $ where   $H = \sum_{k=x,y,z} \alpha_k \, s_k$ can be generated by performing three subsequent small rotations, around the $x$, $y$ and $z$-directions as described above. In particular, 
we use 6 spin-\half particles, in the state 
\begin{equation} 
{\tau_R= \tau_y^{(z)} \otimes \tau_z^{(y)} \otimes \tau_z^{(x)} \otimes \tau_x^{(z)} \otimes   \tau_x^{(y)} \otimes \tau_y^{(x)}.}
\end{equation} 
where the superscripts denote which part of the global reference frame the spins are in, i.e. which angular momentum component they will store.
We then implement $U_H $ by first applying $V_{\alpha_x}$ to the system and the first two reference spins, then $V_{\alpha_y}$ to the system and the next two reference spins, then $V_{\alpha_z}$ to the system and the last two reference spins. Following a similar approach to {Eq.~\eqref{eq:single-spin},} we thereby obtain 
\begin{multline} 
\tr{V_{\alpha_z} V_{\alpha_y} V_{\alpha_x}\, \rho_S \otimes \tau_R  \, V_{\alpha_x}^\dagger V_{\alpha_y}^\dagger V_{\alpha_z}^\dagger}{R}  \\
= U_H \,\rho_S \, U_H^{\dagger} + \cO\left( \tfrac{1}{N^2} \right).
\end{multline}

Finally, we iterate this procedure $N$ times using a new set of six reference spins in the state $\tau_R$ each time.  The  overall reference frame is therefore $\rho_R= \tau_R^{\otimes N}$, which consists of 6$N$ spins (2$N$ in each of the three parts).

In this way  we will approximately implement the desired transformation $U_S =( U_H)^N=\exp\left( -iH \right) $, which is (up to a global phase)  the most general unitary transformation on a spin-\half system.

Defining the full sequence of  transformations by $V$, we  note that as this is a sequence of $V_{\alpha}$ transformations, each of which conserve the three components of angular momentum, $V$ will satisfy the \emph{conservation} property $[V, \textbf{S}^{\text{tot}}] = 0$.

The  error in implementing $U_S$ is bounded by the sum of the errors from each step, giving a total error of $N \cO \left( \frac{1}{N^2}\right) = \cO \left( \frac{1}{N} \right)$~\cite{QRF18}. It follows that
\begin{equation} \label{eq:11}
  \left\| \tr{V \, \rho_S \otimes \rho_R\, V^\dagger }{R} - U_S \rho_S U_S^\dagger \right\|_1 \leq  \cO \left( \tfrac{1}{N} \right),
\end{equation} 
which proves the \emph{accuracy} property in  {Eq.~\eqref{eq:thefirst}} by suitably choosing a sufficiently large $N$ according to the value of $\epsilon$. 

To prove the \emph{separation} property, 
we use {Eq.~\eqref{eq:work_sep}} and its equivalents for  $y$ and $z$ rotations, together with the fact that there are only $2N$ spins in each part of the reference system, imply that 
\begin{align} \label{eq:12}
\left| \Delta s_j  + \Delta s_j^{(j)} \right| &\leq \cO \left( \tfrac{1}{N} \right),  \\ \label{eq:13}
\left| \Delta s_j^{(k)} \right|               &\leq \cO \left( \tfrac{1}{N} \right)  \quad \text{if } j \neq k.
\end{align} 
For any $\delta$, we can therefore choose a sufficiently large $N$ such that {Eqs.~\eqref{eq:second}} and \eqref{eq:third} {hold.} Explicit bounds for the $\cO \left( \frac{1}{N} \right)$ and $\cO \left( \frac{1}{N^2} \right)$  terms in this section may be found in the Appendix.
\end{proof} 

These results can be  extended to systems composed of any number of spin-\half particles, {as we argue in the following. First, the} above proof shows that we can implement any unitary on a single spin. {Then, we} can also implement interactions between two spins inside the system via an interacting unitary such as $\sqrt{SWAP}$ or $e^{-i \theta\, \mathbf{s}^{1} \cdot  \mathbf{s}^{2}}$. These commute with  all extensive conserved quantities and do not require the use of a reference system. Thinking of our spin-\half systems as qubits, we know that the ability to perform all single-qubit unitaries plus any particular interacting two-qubit unitary is computationally {universal \cite{DupontDupond}.} Hence we can construct a circuit to approximately implement any unitary transformation on any number of spin-half systems, whilst storing any changes to angular momentum in different batteries.

{\textit{Extracting the angular momentum components of an unknown spin.---}%
An interesting possibility enabled by the  above procedure is to take an unknown spin-\half state with average spin $\langle {\bf s}\rangle = ( \langle s_x\rangle, \langle s_y\rangle, \langle s_z\rangle)$, and three other systems, and  completely extract the different components of spin into the three systems (up to arbitrary accuracy). That is, we can perform a unitary transformation such that, up to arbitrary accuracy,  
the spin-\half particle finally has average spin zero, and the average spin of the three systems has increased by $(\langle s_x\rangle, 0, 0 )$, $(0, \langle s_y\rangle, 0)$, and $(0, 0, \langle s_z\rangle)$ respectively.

To do this, we use the $x$, $y$ and $z$-parts of our reference frame as the three systems, and include two ancillary spin-\half particles, each in a maximally mixed state, in one of the systems. We then perform a unitary $V$ on the entire state which is given by the circuit construction above, and which approximately implements a unitary 
\begin{align} 
U  = \sum_{n,m=0}^{1} X^n Z^m \otimes \proj{n} \otimes \proj{m}  
\end{align}
on the initial spin  and the two maximally mixed spins, where  $X$ and $Z$ are the Pauli unitaries
on the system. The net effect of $U$ is to completely decohere the state of the spin-\half particle and transform it into the maximally mixed state \cite{tonyFn}. Because the ancillary spins are also left in maximally mixed states, all of the average angular momentum in the initial state of the spin must have been transferred to  the three components  of the reference frame, and hence to the three desired systems. }

\textit{Conclusions.---}%
We have shown that it is possible to perform an arbitrary unitary transformation on any number of spin-\half particles whilst respecting angular momentum conservation, in such a way that any changes in the three components of angular momentum are separated into different `batteries'. Any errors in this procedure can be made arbitrarily small by making these batteries sufficiently large. 

Importantly, the use of the cross product technique is more than a simple technical development. It has, we believe, a deep conceptual meaning. It shows that in the realm of quantum mechanics one can build frames of reference in various ways, all equally good for acting as references (for specifying a direction in our case), but which have fundamentally different properties. We have established this result in the particular case of specifying a direction, but we expect this to be a general property of all quantum frames.

{Our results also allow one to completely extract the different components of angular momentum of an unknown spin state into distinct systems (up to arbitrary precision). It would be interesting to investigate the ultimate limits of such a procedure, such as whether it can be made exact, and the smallest possible implementation.}

The study of quantum thermodynamics has recently been extended to other conserved quantities besides energy, {in particular to non-commuting conserved quantities,} and this result allows one to consider explicit batteries for the angular momentum. The fact that the different components of angular momentum can be separated in this way is particularly surprising given that they do not commute{, and measurements of one component would disturb the others}. 

The protocols we have described for spin-$\tfrac{1}{2}$ systems may be generalised to higher spins,
when the objective is to implement a spatial rotation under total angular momentum conservation. 
However, such a higher-dimensional quantum system has many more observables, each of which could be a 
potential conserved quantity, up to the maximum of $d^2-1$ for a system with $d$-dimensional 
Hilbert space (for instance $d=2\sigma+1$ for a spin-$\sigma$ particle). 
One may ask whether it is possible to construct a reference frame, with a part for each of 
of the observables to be conserved, such that an arbitrary transformation on the
system can be implemented by an operation that globally obeys all potential conservation laws,
while the change of the $k$-th quantity on the system is offset by the 
same change of only the $k$-th part of the frame, all to arbitrary precision.
In the case that the dimension is a power of $2$, we show in the Supplemental 
Material that a complete set of conserved quantities can be constructed for which 
this is possible, by an adaptation of our spin-\half methods. The question
remains open, however, for other dimensions and general sets of quantum observables.

\textit{Acknowledgements.---}%
SP acknowledges support from The Institute 
for Theoretical Studies, ETH Zurich. 
ABS acknowledges support by the Foundation for Polish Science 
(IRAP project, ICTQT, contract no. 2018/MAB/5, co-financed by EU within 
Smart Growth Operational Programme). 
AW acknowledges support by the Spanish MINECO (project FIS2016-86681-P) 
with the support of FEDER funds, and the Generalitat de Catalunya (project 2017-SGR-1127).
This research was partially supported by Perimeter Institute for Theoretical Physics. 
Research at Perimeter Institute is supported by the Government of Canada through the 
Department of Innovation, Science and Economic Development Canada and by the 
Province of Ontario through the Ministry of Research, Innovation and Science.

\appendix

\onecolumngrid

\section{Explicit bounds for  $\cO\left( \frac{1}{N^2} \right)$ and 
         $\cO\left( \frac{1}{N} \right)$ terms for spin-\half systems}
\label{ap:spindec}
Here we provide more detailed technical proofs giving explicit bounds for the  
$\cO\left( \frac{1}{N^2} \right)$ and $\cO\left( \frac{1}{N} \right)$ terms in 
our bounds for spin-\half systems, obtained using similar techniques to those 
introduced in Ref.~\cite{QRF18}.

\subsection{Proof of {Eq.~\eqref{eq:single-spin}}}
{First notice that
\begin{align}  
\nonumber \tr{ V_{\alpha} \, \rho_S \otimes \tau_y^{\prime} \otimes \tau^{\prime\prime}_z \, V_{\alpha}^\dagger }{R} 
    &= \rho_S - i \frac{4\alpha}{N} \tr{ [ T, \rho_S \otimes \tau_y^{\prime}  \otimes \tau_z^{\prime\prime} ]}{R} +  \cO\left( \tfrac{1}{N^2} \right)\,, \nonumber \\
    & =  \rho_S - i \frac{4\alpha}{N} \!\!\! \sum_{j,k,\ell \in \{x,y,z\}}\hspace{-.5cm} \epsilon_{jk\ell}  [ s_j, \rho_S] \, \trace{\!s^{\prime}_k \tau^{\prime}_y\!} \trace{\! s^{\prime\prime}_{\ell} \tau^{\prime\prime}_z\!}  +  \cO\left( \tfrac{1}{N^2} \right)\,, \nonumber \\
    & =  \rho_S - i \frac{\alpha}{N}  [ s_x, \rho_S] +  \cO\left( \tfrac{1}{N^2} \right)\,, \nonumber \\
    & = U_{\alpha, x} \,\rho_S \, U_{\alpha, x}^{\dagger} + \cO\left( \tfrac{1}{N^2} \right) .
\end{align} 
What we need to do now is to provide a rigorous computation for the 
$\cO\left( \tfrac{1}{N^2} \right)$ term. To do this, we must first show that}
\begin{equation} 
  \label{eq:xi}
  \xi \equiv \left\| \tr{ V_{\alpha} \, \rho_S \otimes \tau_y^{\prime} \otimes \tau^{\prime\prime}_z \, V_{\alpha}^{\dagger} }{R}  - ( \rho_S - i \frac{4\alpha}{N} \tr{ [ T, \rho_S \otimes \tau_y^{\prime}  \otimes \tau_z^{\prime\prime} ]) }{R} \right\|_1 
   \leq \cO\left( \frac{1}{N^2} \right)
\end{equation} 
where $V_\alpha = \exp\{ -i \tfrac{4\alpha}{N} T \}$. Expanding the exponentials we obtain 
\begin{align}  \label{eq:16}
\xi   &= \left\| \sum_{n=2}^\infty  \sum_{k=0}^n \tr{ \frac{(-i \frac{4\alpha}{N}\, T)^k}{k!} \rho_S \otimes \tau_y^{\prime} \otimes \tau^{\prime\prime}_z   \frac{(i \frac{4\alpha}{N} \, T)^{n-k}}{(n-k)!}}{R} \right\|_1 \nonumber \\
      &\leq  \sum_{n=2}^\infty \left(  \frac{4\alpha}{N}\right)^n \sum_{k=0}^n \frac{1}{k! (n-k)!} \, \left\| T^k   \rho_S \otimes \tau_y^{\prime} \otimes \tau^{\prime\prime}_z  \, \, T^{n-k}  \right\|_1 \nonumber \\
      &\leq  \sum_{n=2}^\infty \left(  \frac{4\alpha}{N}\right)^n \sum_{k=0}^n \frac{1}{k! (n-k)!}  \| T^k \| \, \| \rho_S \otimes \tau_y^{\prime} \otimes \tau^{\prime\prime}_z \|_1 \,   \| T^{n-k} \| \nonumber \\
      &\leq  \sum_{n=2}^\infty \left(  \frac{4\alpha}{N}\right)^n \sum_{k=0}^n \frac{1}{k! (n-k)!}  \|T\|^n \nonumber \\
      &\leq \sum_{n=2}^{\infty} \frac{1}{n!} \left(  \frac{3\alpha}{N}\right)^n \sum_{k=0}^n \frac{n!}{k! (n-k)!} \nonumber \\
      &=  \sum_{n=2}^{\infty} \frac{1}{n!} \left(  \frac{6\alpha}{N}\right)^n  \nonumber \\
      &= 36 \left(  \frac{\alpha}{N}\right)^2  \sum_{n=2}^{\infty} \frac{1}{n!} \left(  \frac{6\alpha}{N}\right)^{n-2}. 
\end{align} 
where in the fifth line we have used  the fact that 
$T=\sum_{j,k,\ell \in \{x,y,z\}} \epsilon_{jk\ell} \,s_j  \otimes s^{\prime}_k \otimes s^{\prime\prime}_\ell$ 
has 6 non-zero terms, each of which has operator norm 
$\frac{1}{8}$ (as $\| s_j\| = \frac{1}{2}$), giving 
$\|T\| \leq \frac{3}{4}$. Hence for $N \geq 6 \alpha$, 
\begin{equation}  
  \label{eq:above} 
  \xi \leq 36 \, (e-2) \left(  \frac{\alpha}{N}\right)^2. 
\end{equation} 
On the other hand, applying {Eq.~(D5)} of Ref.~\cite{QRF18} to our setup yields
\begin{equation}\label{eq:23} 
\left\| U_{\alpha, x} \rho_S U_{\alpha, x}^{\dagger} - \left(\rho_S - i \frac{\alpha}{ N} [ s_x, \rho_S] \right) \right\|_1 \leq 4 (e-2) \left(  \frac{\alpha}{N}\right)^2,
\end{equation} 
where $U_{\alpha,x} = \exp\left(-i \frac{\alpha}{N} s_x \right)$. 

Combining {Eqs.~\eqref{eq:above}} and \eqref{eq:23} through the triangle inequality, we obtain 
\begin{equation} 
\left\| \tr{ V_{\alpha} \, \rho_S \otimes \tau_y^{\prime} \otimes \tau^{\prime\prime}_z \, V_{\alpha}^{\dagger} }{R} - U_{\alpha, x} \rho_S U_{\alpha, x}^{\dagger} \right\|_1 \leq 40 (e-2) \left(  \frac{\alpha}{N}\right)^2,
\end{equation} 
where we have used the fact that $\tr{ [ T, \rho_S \otimes \tau_y^{\prime} \otimes \tau_z^{\prime\prime}]}{R} = \frac{1}{4} [ s_x, \rho_S]$.

\subsection{Proof of Eqs.~\eqref{eq:spindelta1} and \eqref{eq:spindelta2}}
\label{ap:A2}
{Here we will discuss the case of Eq.~\eqref{eq:spindelta1}, 
since Eq.~\eqref{eq:spindelta2} follows similarly. First notice that 
\begin{align} 
  \Delta s^{\prime}_i &= \trace{(\openone \otimes s_i^{\prime} \otimes \openone) V_{\alpha} \, (\rho_S \otimes \tau^{\prime}_y\otimes \tau^{\prime\prime}_z) \, V_{\alpha}^\dagger } - \trace{s_i^{\prime} \tau_y^{\prime}}, \nonumber \\
&=  - i \frac{4\alpha}{N} \sum_{j,k,\ell \in \{x,y,z\}} \epsilon_{jk\ell} \trace{ s_j \rho_S} \trace{s_i^{\prime} [s^{\prime}_k, \tau^{\prime}_y]} \trace{s^{\prime\prime}_\ell \tau^{\prime\prime}_z}+  \cO\left( \tfrac{1}{N^2} \right),\nonumber \\  
&=  - \frac{\alpha}{N} \delta_{i,z}  \trace{ s_y \rho_S} +  \cO\left( \tfrac{1}{N^2} \right),  \label{eq:thenewone}
\end{align}
where in the last step we have noted that  the expression is only non-zero when $\ell=z$, $k=x$  and $j=y$, and used the spin commutation relation $[s_a, s_b]=i \epsilon_{abc} s_c$.

 Now, applying similar arguments to those in  {Eqs.~\eqref{eq:xi}}--\eqref{eq:above}  to the changes in spin given by {Eq.~\eqref{eq:thenewone}} we obtain }
\begin{align} 
\left| \Delta s_i^{\prime} + \frac{\alpha}{N} \delta_{i,z}  \trace{ s_y \rho_S} \right| \leq 18 \, (e-2) \left(  \frac{\alpha}{N}\right)^2, \\
\left| \Delta s_i^{\prime\prime}- \frac{\alpha}{N} \delta_{i,y}  \trace{ s_z \rho_S} \right| \leq  18 \, (e-2) \left(  \frac{\alpha}{N}\right)^2.
\end{align}

\subsection{Proofs of Eqs.~\eqref{eq:11}, \eqref{eq:12} and \eqref{eq:13}}
\label{ap:seq}
Let us start by proving {Eq.~\eqref{eq:11}.}  
When considering a general small rotation, we must obtain a bound on  
\begin{align} 
\zeta \equiv \left\| \tr{ V_{\alpha_z} V_{\alpha_y} V_{\alpha_x} ( \rho_S \otimes \tau_R) \, V_{\alpha_x}^{\dagger} V_{\alpha_y}^{\dagger} V_{\alpha_z}^{\dagger}}{R} - (\rho_S -i \tfrac{1}{N} \, [H, \rho_S]) \right\|_1 .
\end{align} 
Following a similar argument to that below {Eq.~(D7)} of Ref.~\cite{QRF18}, expanding the unitaries and collecting terms in $\left( \frac{1}{N}\right)^n$ for $n \geq 2$ we obtain at most $6^n$ contributions (as each of the 6 unitaries could provide each power of $\frac{1}{N}$), with a constant coefficient upper bounded by $(4\alpha_{\max})^n$  where $\alpha_{\max} \leq \pi$. Each term also contains the trace norm of an operator with $n$ copies of $T$, acting on different parties, distributed either before or after the initial state. Following the argument above {Eq.~\eqref{eq:above},} such a term is upper bounded by $\left(\frac{3}{4}\right)^n$. Combining all of these observations we obtain 
\begin{equation} 
\zeta \leq \sum_{n=2}^{\infty} \left( \frac{18 \pi}{N}\right)^n.
\end{equation}  
For $N \geq 36 \pi$ we therefore obtain $\zeta \leq 648 \pi^2 \left( \frac{1}{N} \right)^2$. Combining this with {Eq.~(D11)} of Ref.~\cite{QRF18}, for $D=3$, we obtain 
\begin{align} 
\left\| V_{\alpha_z} V_{\alpha_y} V_{\alpha_x} (\rho_S \otimes \tau_R) V_{\alpha_x}^{\dagger} V_{\alpha_y}^{\dagger} V_{\alpha_z}^{\dagger}{R} - U_H \rho_S U_H^{\dagger} ) \right\|_1  \leq \left(648 + 16(e-2)\right) \frac{\pi^2}{N^2}
\end{align} 
for sufficiently large $N$. Iterating this transformation $N$ times and using the inductive argument in Appendix C of Ref.~\cite{QRF18}, gives 
\begin{equation} 
\left\| \tr{V \, \rho_S \otimes \rho_R\, V^\dagger }{R} - U_S \rho_S U_S^\dagger \right\|_1 \leq \left(648 + 16(e-2)\right) \frac{\pi^2}{N}.
\end{equation} 
This completes the proof of {Eq.~\eqref{eq:11}.}

To now prove {Eqs.~\eqref{eq:12}} and \eqref{eq:13}, 
we can apply a similar argument as above. This leads to
\begin{align} 
\left| \Delta s_j  + \Delta s_j^{(j)} \right| &\leq 648 \pi^2 \left( \frac{1}{N} \right) , \\
\left| \Delta s_j^{(k)} \right| & \leq  324 \pi^2 \left( \frac{1}{N} \right) ,
\end{align} 
from which the argument follows.

\section{Separating conserved quantities for higher dimensional systems}
\label{ap:gen}
We will now consider whether we can separate changes  
in conserved quantities for higher dimensional systems, with dimension $d>2$.

\subsection{{Introductory thoughts on systems of dimension $d>2$ and spin conservation laws}}
{Let us first consider angular momentum conservation of spin $s$ systems, with dimension $d=2s+1$. In this case, we can use essentially the same procedure as {in the main text} {(now taking $\tau_j$ as the eigenstate of $s_j$ with maximum spin in the $j$-direction, and $  V_\alpha = \exp\{ -i \tfrac{\alpha}{s^2 N}\}$)} to approximately implement any single system rotation of the form $e^{-i \theta \, \mathbf{n} \cdot \mathbf{s}}$ on a spin-$s$ system, whilst conserving the three components of total angular momentum $S_x$, $S_y$ and $S_z$. Any changes in angular momentum will  be separated into three different `batteries' as before. Furthermore, we could again implement  interacting unitaries between different spins such as  $e^{-i \theta\, \mathbf{s}^{(1)} \cdot  \mathbf{s}^{(2)}}$, as this is rotationally invariant. 

However, unlike in the case of spin-\half particles, spatial rotations of the form $e^{-i \theta \, \mathbf{n} \cdot \mathbf{s}}$ no longer represent the complete set of local unitary transformations for spin-$s$ particles. It would be interesting to explore what additional  transformations are required in order to give a universal gate set, and whether these can be implemented in such a way as to localise any changes to angular momentum. 
 
Moving from angular momentum conservation to more general conservation laws, {an interesting question} is {whether} we can construct a complete basis of conserved quantities, such that arbitrary unitary transformations of a system can be performed whilst separating the changes in all of these conserved quantities into different `batteries'. 
{Later in this section} { we generalise {the proofs} to show that this is possible when the dimension of the system is $2^n$.}
}

\subsection{{Generalisation to arbitrary dimension: what we can and may not do (yet)}}

Here we present a generalisation of the reference frame {defined in the main text for spin-$\frac12$ systems,} and of the operator $T$ of {Eq.~\eqref{eq:theT},} to higher dimensional systems. In particular, we give sufficient conditions to be able to construct a complete basis of extensive conserved quantities for such systems, and a reference frame that allows us to perform arbitrary unitary transformations of the system whilst storing any changes in these conserved quantities in different batteries (up to arbitrarily small corrections).

Consider a system $S$, of dimension $d>2$. In this case there exist $K = d^2 -1$ possible linearly independent  conserved quantities, where without loss of {generality} we do not include the scalar. In what follows, we will consider the case in which there exists a particular choice for these conserved quantities given by K Hermitian operators  $\cM=\{O_k\}_{k=0\ldots K-1}$ with the following properties
\begin{enumerate}[label=(\roman*)] 
\item Traceless: $\tr{O_k}{}=0$ for all $k$ 
\item Orthogonal: $\tr{O_k O_\ell}{} =0$ if $k \neq \ell$
\item  Closed under commutation: For each $k,\ell$ either $[O_k,O_\ell]=0$, or there exists an $m$ such that  $[O_k,O_\ell] \propto O_m$.
\end{enumerate}  
Together with the identity these operators form an orthogonal operator basis for the $d$-dimensional system. In (iii), note that the constant of proportionality could depend on $k$ and $\ell$, and that $m \neq \ell$ as $\tr{ [O_k,O_\ell] O_\ell}{}= \tr{O_k [O_\ell,O_\ell]}{}=0$ (and similarly $m \neq k$). 

First let us consider how to perform the small unitary transformation $U_{\alpha, 0} = \exp(-i \frac{\alpha}{N} \, O_0)$. To do this, we first prepare a reference frame consisting of $D = K-1$ particles of the same type as the system, initialised in the state 
\begin{align}
\rho_R = \rho^{(1)} \otimes \ldots \otimes \rho^{(D)}\,,
\end{align}
where $\rho^{(k)} = \frac{1}{d} \left( \id + \frac{1}{\| O_k \|} \, O_k \right)$. Given that the operators $O_k$ are traceless and orthogonal, it follows that $\trace{O_k \rho^{(r)}} = \eta_k \delta_{k,r}$ where $\eta_k = \tr{O_k^2}{}/(d \| O_k\|)$.

A generalisation of the interaction $T$ of {Eq.~\eqref{eq:theT}} to systems of arbitrary dimensions is
\begin{align}\label{eq:genT}
T = \sum_{a, a_1, \ldots, a_D} f_{aa_1\ldots a_D} \, O_a \otimes O_{a_1} \otimes \ldots \otimes O_{a_D}\,,
\end{align}
where $f_{aa_1\ldots a_D} = 0$ unless all the sub-indices are different, in which case  $f_{aa_1\ldots a_D} = \pm 1$, where the sign indicates if the sub-indices are an even or odd permutation of $0,1, \ldots D$.  

Notice that for $d=2$, $f$ is just the antisymmetric symbol $\epsilon_{ijk}$. In addition, when $d=2$ and $\cM = \{s_k\}_{k=x,y,z}$, we recover the interaction defined in {Eq.~\eqref{eq:theT}} and the batteries defined in {the main text for spin-$\frac12$ systems.}

The main requirement that $T$ should satisfy is $[T,\mathbf{C}]=0$ for any extensive conserved quantity $\mathbf{C}$. We will show this in the next subsection, but for now we focus on using it to perform transformations of the system. The operator $T$  allows us to define a global unitary interaction among $K$ systems given by 
\begin{align}\label{eq:theint}
V =  \exp \left(-i \frac{1}{\eta_1 \ldots \eta_D} \frac{\alpha}{N} T \right)\,,
\end{align}
similarly to {Eq.~\eqref{eq:spinU}.}

The effective action of this global unitary $V$ on our system $A$ is given by $\tr{V \, \rho_S \otimes \rho_R \, V^\dagger}{R}$. Computing this trace, we find that only the term in $T$ containing $f_{012 \ldots D}$ gives a non-zero contribution to first order in $\frac{1}{N}$, leading to 
\begin{align} 
 \tr{V \, \rho_S \otimes \rho_R \, V^\dagger}{R} &=  \rho_S -i\frac{\alpha}{N}  \, [O_0,\rho_S] + \mathcal{O}\left(\frac{1}{N^2}\right)\nonumber \\ 
& =U_{\alpha, 0}  \rho_S U_{\alpha, 0}^{\dagger}  + \mathcal{O}\left(\frac{1}{N^2}\right), 
\end{align}

Similarly we can perform the small unitary transformation $U_{\alpha, r} = \exp(-i \frac{\alpha}{N} \, O_r)$,  generated by an arbitrary operator $O_r$ by preparing the reference frame such that $\rho^{(k)} = \frac{\id}{d} + c_{(k+r) \textrm{mod}  D} \, O_{(k+r) \textrm{mod} D}$ and applying $V$, such that the only non-zero term in the trace corresponds to $f_{r\,(r+1)\ldots D\,0\,1\ldots (r-1)}.$ As in the main paper, by considering a sequence of small transformations  generated by each operator $O_r$ we can implement an arbitrary small transformation $\exp(-i \frac{H}{N})$. By iterating this process $N$ times, we can then implement an arbitrary transformation $\exp(-i H)$ on the system. As the error per step {is} $\mathcal{O}\left(\frac{1}{N^2}\right)$ and there are $\mathcal{O}\left(N\right)$ steps, the overall error can be made as small as desired by choosing $N$ sufficiently large. 

\bigskip

Next, let us see how the the conserved quantities are stored in the batteries (i.e.,~the reference frame). For simplicity, we again consider implementing the small transformation $U_{\alpha, 0}$ generated by $O_0$ on the system.  The change in the  conserved quantity $O_{k}$ for particle $r$ in the reference frame is given by 
\begin{align*}
\Delta O_{k}^r = \trace{{O}_k^r \, V \, \rho_S \otimes \rho_R \, V^\dagger } - \trace{O_{k} \, \rho^{(r)}}\,,
\end{align*}
where ${O}_k^r := \id \otimes ... \otimes O_{k} \otimes \ldots \otimes \id$ is the operator that is $\id$ everywhere except for the $r$-th frame particle where it is $O_{k}$.
Expanding $V$ to first order in $\frac{1}{N}$ we obtain 
\begin{align} 
\Delta O_{k}^r &=-i \frac{1}{\eta_1 \ldots \eta_D} \frac{\alpha}{N} \, \trace{{O}_k^r\, [T,  \rho_S \otimes \rho_R ] } + \mathcal{O}\left(\frac{1}{N^2}\right) \nonumber \\
& = -i \frac{1}{\eta_1 \ldots \eta_D} \frac{\alpha}{N}\, \sum_{a, a_1, \ldots, a_D} f_{aa_1\ldots a_D} \trace{O_{k} \,[O_{a_r}, \rho^{(r)}] } \trace{O_a \rho_S} \prod_{j \neq r} \trace{O_{a_j} \rho^{(j)}} + \mathcal{O}\left(\frac{1}{N^2}\right) \nonumber \\
&= -i \, \frac{1}{ \eta_r} \frac{\alpha}{N}\, \sum_{a,a_r} f_{a,1,\ldots,r-1, a_r,r+1,D} \, \trace{O_a \rho_S} \, \trace{O_{k}\, [O_{a_r}, \rho^{(r)}] }  + \mathcal{O}\left(\frac{1}{N^2}\right) \,. \label{eq:elfra}
\end{align}
Now notice that the sum in {Eq.~\eqref{eq:elfra}} consists of two terms: $\{a=0,a_r=r\}$ and $\{a=r,a_r=0\}$, since any other assignment of values to $a$ or $a_r$ will render $f_{a,1,\ldots,r-1, a_r,r+1,D} = 0$. Hence, 
\begin{align}
\Delta O_{k}^r =   -i \, \frac{1}{ \eta_r} \frac{\alpha}{N}\, \Big( &f_{0,1,\ldots,r-1, r,r+1,D} \, \trace{O_0 \rho_S} \, \trace{O_{k}\, [O_{r}, \rho^{(r)}]}   \\
&+ f_{r,1,\ldots,r-1,0,r+1,D} \, \trace{O_r \rho_S} \, \trace{O_{k}\, [O_0, \rho^{(r)}]} \Big) + \mathcal{O}\left(\frac{1}{N^2}\right) .
\end{align}
Using $\rho^{(r)} = \frac{1}{d} \left( \id + \frac{1}{\| O_k \|} \, O_k \right)$ and  $f_{r,1,\ldots,r-1,0,r+1,D}=-1$, we obtain  
\begin{align}
\Delta O_{k}^r =   i \,  \frac{\alpha}{N} \frac{\trace{O_r \rho_S} \, \trace{O_{k} \, [O_0,  O_r]}}{\tr{O_r^2}{}}+ \mathcal{O}\left(\frac{1}{N^2}\right) .
\end{align}
So far, we have only used the first two properties of the operators $O_k$. Given the third property that the operators are closed under commutation, we find that either $[O_0,O_r]=0$, in which case the $r^{th}$ subsystem in the reference frame accumulates no changes in any conserved quantity (i.e., $\Delta O_{k}^r = 0$ for all $k$), or $[O_0,O_r] \propto O_m$. In {the latter,} the $r^{th}$ subsystem will only accumulate changes in the conserved quantity $O_m$ (i.e., $\Delta O_{k}^r = 0$ unless $k=m$). As each subsystem accumulates changes in at most one conserved quantity, it is possible to separate all of the conserved quantities into different batteries. A similar result will apply for small rotations generated by any $O_k$ and thus for the overall transformation.

\subsection{$T$ preserves extended conserved quantities}
\label{ap:b21}
In this subsection, we show that $T$ commutes with all extensive conserved quantities. Let us begin by revisiting the case with $d=2$, where our conserved quantities are $s_x$, $s_y$ and $s_z$, as presented in {the main text for spin-$\frac12$ systems.} Here, T acts on three particles as 
\begin{align*}
T = \sum_{a, a_1, a_2} f_{aa_1a_2} \, s_a \otimes s_{a_1} \otimes s_{a_2}\,.
\end{align*}
Now consider the conserved quantity corresponding to the total angular momentum in the $x$-direction, $S_x= s_x \otimes \id \otimes \id + \id \otimes s_x \otimes \id + \id \otimes \id \otimes s_x$, which is the sum of the angular momentum of the three individual particles in the $x$-direction.
The commutator $[S_x, T]$ may be expressed as
\begin{align} \nonumber
[S_x, T] =&  \sum_{a, a_1, a_2} f_{aa_1a_2} \, [s_x, s_a] \otimes s_{a_1} \otimes s_{a_2} \\
&+  \sum_{a, a_1, a_2} f_{aa_1a_2} \,  s_a \otimes [s_x, s_{a_1}] \otimes s_{a_2} \\
&+  \sum_{a, a_1, a_2} f_{aa_1a_2} \, s_a \otimes s_{a_1} \otimes [s_x, s_{a_2}] \,. \nonumber
\end{align}
We will now see how each term in each sum is either zero or cancelled out by a similar term appearing in a different sum but with the opposite sign. {First, note} that all terms in the final answer must contain the same spin operator on exactly two particles and a different spin operator on the third particle. This is because $a, a_1$, and $ a_2$ must all be distinct in order for $f_{aa_1a_2}$ to be non-zero, and the spin operator generated by the commutator is different from the spin operators appearing within it. As an example, a term proportional to $s_y \otimes s_y \otimes s_x$ can be generated by   $[s_x, s_z] \otimes s_y \otimes s_x$ and   $s_y \otimes [s_x, s_z] \otimes s_x$. The former  of these will contribute to $[S_x, T]$ with a coefficient $f_{zyx}$ via the first sum, whereas the latter will contribute with a coefficient $f_{yzx}$ via the second sum. As $f_{zyx}=-f_{yzx}$ these two terms will cancel out, and the same applies to all other terms, giving a total commutator of zero.

\bigskip
Now let us see how a similar reasoning applies to the higher dimensional case.  We want to show that {the total conserved quantity $O_k^{\mathrm{tot}} = O_k + O_k^{R}$ is preserved by $T$, i.e., that $[O_k^{\mathrm{tot}}, T]=0$, where}
\begin{align*}
[O_k^{\mathrm{tot}}, T] = \sum_{r=0}^n [{O}_k^r, T] \,  
\end{align*}
with
\begin{align*}
[{O}_k^r, T] = \sum_{a, a_1, \ldots, a_D} f_{aa_1\ldots a_D} \, O_a \otimes \ldots \otimes [O_k, O_{a_r}] \otimes \ldots \otimes \ldots \otimes O_{a_D}\,.
\end{align*}

We will see how each term in $[{O}_k^r, T] $, for every $r$, is either $0$ or cancelled out by a term in $[{O}_k^u, T] $, for some {other} $u$. To begin, take a fixed but arbitrary $r$, and a non-zero term in $[{O}_k^r, T] $. This term is identified by its indices $aa_1\ldots a_D$. Given the `closed under commutation' property of the operators in $\cM$, $[O_k, O_{a_r}] = \xi \, O_{b}$, where $b \in \{a, a_1, \ldots, a_D\} \setminus \{a_r\} $, and $\xi$ is a constant that may depend on $k$ and $a_r$. Let $u$ be such that $a_u = b$, and for simplicity, and without loss of generality, assume $u > r$. Hence, 
\begin{align*}
O_a \otimes \ldots \otimes [O_k, O_{a_r}] \otimes \ldots \otimes O_{a_u}  \otimes \ldots \otimes O_{a_D} &= \xi \, O_a \otimes \ldots \otimes O_b \otimes \ldots \otimes O_{b} \otimes \ldots \otimes  O_{a_D} \\
&= O_a \otimes \ldots \otimes O_{a_u} \otimes \ldots \otimes [O_k, O_{a_r}] \otimes \ldots  \otimes O_{a_D} \,.
\end{align*}
Now, the term with indices $(aa_1 \ldots a_r \ldots a_u \ldots a_D)$ will contribute to $[O_k^{\mathrm{tot}}, T] $ via the sum in the term $[{O}_k^r, T] $ with coefficient $f_{aa_1 \ldots a_r \ldots a_u \ldots a_D}$. The same term will however appear in the sum in the term $[{O}_k^u, T] $, with indices $(aa_1 \ldots a_u \ldots a_r \ldots a_D)$, and will contribute to $[\cO_k, T] $ with coefficient $f_{aa_1 \ldots a_u \ldots a_r \ldots a_D}$. By the properties of $f$, $f_{aa_1 \ldots a_r \ldots a_u \ldots a_D} = - f_{aa_1 \ldots a_u \ldots a_r \ldots a_D}$, and the two terms cancel out. As every term in the final answer can be generated in exactly two ways with opposite coefficients, the result that $[O_k^{\mathrm{tot}}, T]=0$ then follows. Although we will think of the operators $O_k^{\mathrm{tot}}$ as the conserved quantities of interest, note that any other extensive conserved quantity can be expressed as a linear combination of the operators $O_k^{\mathrm{tot}}$ and the identity, this means that $T$ commutes with all extensive conserved quantities. 

\medskip
Interestingly, one can also show that $[O_k^{\mathrm{tot}}, T]=0$ without the `closed under commutation'  property, by using the fact that  $[O_k, O_\ell] = \sum_{m\neq k,\ell} w_{k\ell m} O_m$ for some coefficients $w_{k\ell m}$. As above, every non-zero term in $[O_k^{\mathrm{tot}}, T]=0$ can be generated in exactly two ways with opposite coefficients. Hence for any complete set of traceless orthonormal operators $O_k$, the operator $T$ constructed from those operators commutes with all  extensive conserved quantities.

\subsection{Dimension $2^n$}
\label{ap:2n}
Let us consider the case where we have a system $S$ of dimension $d=2^n$. Here, we can indeed find an operator basis with the properties specified above, and can thus separate any changes to a complete basis of conserved quantities. In particular, we take the set $ \cM$ of $K=4^n-1$ operators  $\{O_k\}$ to consist of products of spin-$\frac{1}{2}$ operators. That is, each $O\in \cM$ has the form $O=s_1 \otimes \ldots \otimes s_n$, where $s_k \in \left\{\frac{1}{2}\id, \frac{1}{2}\sigma_x, \frac{1}{2}\sigma_y, \frac{1}{2}\sigma_z\right\}$, but where not all $s_k$ may equal the identity. 

It is easy to verify that these operators are traceless and orthogonal. Furthermore, the product of any two distinct operators in $\cM$ is proportional to another operator in the set, i.e., there exists an $m$ such that  $O_k O_\ell \propto O_m$.  By taking the Hermitian conjugate, we also have  $O_\ell O_k \propto O_m$. Hence either $[O_k, O_\ell]=0$, or $[O_k, O_\ell]\propto O_m$ and these operators are hence closed under commutation as defined above.

\subsection{{Explicit} bounds} 

As in the previous case, we can also calculate explicit bounds on the $\cO\left( \frac{1}{N^2} \right)$  terms in the bounds above, using similar techniques to those in {Appendix} \ref{ap:spindec} and  in  Ref.~\cite{QRF18}.

In particular, we find 
\begin{align} 
\left\| \, \tr{V \, \rho_S \otimes \rho_R \, V^\dagger}{R} - (\rho_S -i \frac{\alpha}{N} \, [O_0,\rho_S]) \, \right\|_1 & \leq  \left( 2 K!\, \frac{\|O_0\|\|O_1\| \ldots \|O_D\|}{\eta_1 \eta_2 \ldots \eta_D} \right)^2  \,(e-2)\,\left( \frac{\alpha}{N}\right)^2\,, \label{eq:31}
\end{align}
where we have used an approach similar to that in the derivation of {Eq.~\eqref{eq:16},}  assuming that $N> \left( 2 K!\, \frac{\|O_0\|\|O_1\| \ldots \|O_D\|}{\eta_1 \eta_2 \ldots \eta_D} \right) \alpha$. 

Applying now {Eq.~(D5)} of Ref.~\cite{QRF18} to this setup yields
\begin{equation}\label{eq:32} 
\left\| U_{\alpha, 0}  \rho_S U_{\alpha, 0} ^{\dagger} - \left(\rho_S -i \frac{\alpha}{N} \, [O_0,\rho_S] \right) \right\|_1 \leq 4 (e-2) \left( \frac{\alpha}{N} \right)^2,
\end{equation} 

Combining {Eqs.~\eqref{eq:31}} and \eqref{eq:32} through the triangle inequality, we find
\begin{align} \label{eq:33}
\left\|\tr{V \, \rho_S \otimes \rho_R \, V^\dagger}{R} - U_{\alpha, 0}  \rho_S U_{\alpha, 0} ^{\dagger} \right\|_1 &\leq 4(e-2)\left( 1+  \left(  K!\, \frac{\|O_0\|\|O_1\| \ldots \|O_D\|}{\eta_1 \eta_2 \ldots \eta_D} \right)^2 \right) \, \left( \frac{\alpha}{N} \right)^2\,.
\end{align}
Finally, following a similar approach we obtain
\begin{align}
\left| \Delta O_{k}^r -  i \,  \frac{\alpha}{N} \frac{\trace{O_r \rho_S} \, \trace{O_{k} \, [O_0,  O_r]}}{\tr{O_r^2}{}} \right|   \leq \| O_k\| \left( 2 K!\, \frac{\|O_0\|\|O_1\| \ldots \|O_D\|}{\eta_1 \eta_2 \ldots \eta_D} \right)^2  \,(e-2)\,\left( \frac{\alpha}{N}\right)^2. 
\end{align}
In the particular case considered above of dimension $d=2^n$, in which the operators $O_k$ are products of spin-\half operators and $\id/2$, note that $\eta_k = \| O_k\| =\frac{1}{d}$ and hence these expressions simplify considerably.

\end{document}